\definecolor{darkred}{rgb}{0.7,0,0}
\newcommand{\Order}{\mathrm{O}}
\newtheorem{theorem}{Theorem}[section]
\newtheorem{lemma}[theorem]{Lemma}
\newdefinition{observation}[theorem]{Observation}
\newdefinition{claim}[theorem]{Claim}
\newproof{proof}{Proof}
\newcommand{\localqed}{\hfill$\diamond$}
\newcommand{\fig}{fig}
\newcommand{\figref}[1]{\figurename~\ref{#1}}
\begin{document}

\journal{arXiv} 

\title{Longest Common Subsequence in Sublinear Space\tnoteref{t1}} 
\tnotetext[t1]{Partially supported by JSPS KAKENHI grant numbers 
JP18H04091, 
JP18K11153, 
JP18K11168, 
JP18K11169. 
This work was started at the 3rd Japan-Canada Joint Workshop on Computational Geometry and Graphs
held on January 7--9, 2018 in Vancouver, Canada.
The authors thank the other participants of the workshop for the fruitful discussions.
}

\author[1]{Masashi Kiyomi\corref{cor1}%
}
\ead{masashi@yokohama-cu.ac.jp}

\author[2]{Takashi Horiyama%
}
\ead{horiyama@ist.hokudai.ac.jp}

\author[3]{Yota Otachi%
}
\ead{otachi@nagoya-u.jp}

\cortext[cor1]{Corresponding author}

\address[1]{School of Data Science, Yokohama City University}
\address[2]{Faculty of Information Science and Technology, Hokkaido University}
\address[3]{Department of Mathematical Informatics, Nagoya University}

\begin{abstract}
We present the first $\mathrm{o}(n)$-space polynomial-time algorithm for
computing the length of a longest common subsequence.
Given two strings of length $n$, the algorithm runs in $\Order(n^{3})$ time with
$\Order\mleft(\frac{n \log^{1.5} n}{2^{\sqrt{\log n}}}\mright)$ bits of space.
\end{abstract}

\begin{keyword}
longest common subsequence, space-efficient algorithm
\end{keyword}

\maketitle

\section{Introduction}

A \emph{subsequence} of a string is a string that can be obtained from the original string by removing some elements.
If two strings $X$ and $Y$ contain a string $Z$ as their subsequences, then $Z$ is a \emph{common subsequence} of $X$ and $Y$.
For example, ``tokyo'' and ``kyoto'' have ``to'' and ``oo'' as common subsequences but not ``too.''
The problem of finding (the length of) a \emph{longest common subsequence} is a classic problem in computer science.
The applications of the problem range over many fields
(see \cite{Hirschberg83,PatersonD94,Apostolico97,BergrothHR00} and the references therein).
Given two strings of length $n$,
the problem can be solved in $\Order(n^2)$ time with $\Order(n \log n)$ bits of space using
a dynamic programming approach~\cite{WagnerF74,Hirschberg75}.
On the other hand, it is known that under the strong exponential time hypothesis,
there is no $\Order(n^{2-\varepsilon})$-time algorithm for any $\varepsilon > 0$~\cite{BackursI18,AbboudBW15,BringmannK15,AbboudHWW16}.
Given this lower bound, subquadratic-time approximation algorithms have been 
proposed~(see \cite{HajiaghayiSSS19,RubinsteinS20} and the references therein).
Recently, Cheng et al.~\cite{Cheng20arxiv} presented
an approximation algorithm of factor $1 - \mathrm{o}(1)$
that runs in polynomial time with polylogarithmic space.

In this paper, we seek for \emph{exact} (i.e., non-approximation) polynomial-time algorithms with small space complexity.
To the best of authors' knowledge, there was no known polynomial-time algorithm that runs with $\mathrm{o}(n)$ bits of space.
A natural (but probably quite challenging) question in this direction would be whether there is a polynomial-time algorithm that runs with 
\emph{truly} sublinear space of $\Order(n^{1-\varepsilon})$ bits for some $\varepsilon > 0$~\cite{Cheng20arxiv}.
We make a step toward this goal by giving a polynomial-time algorithm
that runs with \emph{slightly} sublinear space.
More precisely, the result of this paper is as follows.
\begin{theorem}
\label{thm:main}
Given two strings of length $n$, the length of a longest common subsequence
can be computed in $\Order(n^3)$ time with $\Order\mleft(\frac{n\log^{1.5} n}{2^{\sqrt{\log n}}}\mright)$ bits of space.
\end{theorem}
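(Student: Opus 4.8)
The plan is to recursively recompute the classical longest-common-subsequence table while never materializing a full row or column of it. Write $X=x_1\cdots x_n$, $Y=y_1\cdots y_n$, and let $L[i][j]$ be the length of a longest common subsequence of $x_1\cdots x_i$ and $y_1\cdots y_j$, so that $L[i][j]=L[i-1][j-1]+1$ if $x_i=y_j$ and $L[i][j]=\max\{L[i-1][j],\,L[i][j-1]\}$ otherwise, with $L[0][\cdot]=L[\cdot][0]=0$. Two structural facts are used throughout. First, along every row and every column of $L$ consecutive entries differ by $0$ or $1$; hence a whole row (or column) can be stored, \emph{difference-encoded}, in $n+O(1)$ bits instead of $\Theta(n\log n)$, which already yields the textbook $O(n^2)$-time, $O(n)$-bit algorithm. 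Second, $L$ restricted to any axis-parallel sub-rectangle $R=[i_0,i_1]\times[j_0,j_1]$ is completely determined, via the same recurrence, by the entries on the top edge (row $i_0$) and the left edge (column $j_0$) of $R$. Accordingly the natural subproblem is $\textsc{Boundary}(R)$: given difference-encodings of $L$ on the top and left edges of $R$, output difference-encodings of $L$ on the bottom and right edges of $R$. The answer to the whole problem is read off from $\textsc{Boundary}$ applied to the $n\times n$ square.

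To get below $n$ bits, I would implement $\textsc{Boundary}(R)$ so that it never holds an entire edge in memory at once. Cut $R$ into an $f\times f$ array of congruent sub-rectangles (partitioning both the row range and the column range into $f$ equal groups) and process the sub-rectangles in row-major order. Handling sub-rectangle $R_{a,b}$ requires its top edge, which is the bottom edge of $R_{a-1,b}$, and its left edge, which is the right edge of $R_{a,b-1}$: the left edge is the one just produced and is kept in full (only $\mathrm{height}(R)/f$ values), while the top edge is obtained by a recursive call to $\textsc{Boundary}$ on the rectangle lying above the corresponding block of columns. Crucially, the $f$ bottom edges produced while sweeping one sub-rectangle row are \emph{not} retained -- together they would form a full horizontal cut of $R$, i.e.\ $\Theta(\mathrm{width}(R))$ bits -- instead only an $f$-sparse sample of each (for instance the values at the sub-rectangle corners) is kept, and any finer piece needed later is recomputed by a further recursive $\textsc{Boundary}$ call. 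Correctness is immediate from the two structural facts above; all the content is in the resource accounting.

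For the parameters I would take $f=2^{\Theta(\sqrt{\log n})}$, so the recursion, which divides each side length by $f$ per level, has depth $d=\log_f n=\Theta(\sqrt{\log n})$; base-case rectangles (those with side at most $f$, one level before sides reach $1$) are solved by the plain DP in $O(f^2)$ time and $O(f)$ bits, which is $o\bigl(n/2^{\sqrt{\log n}}\bigr)$ because $2^{2\sqrt{\log n}}=n^{o(1)}$. A telescoping charge over the recursion tree -- we have a factor-$n$ time budget over the $O(n^2)$ baseline to pay for all recomputation -- keeps the running time at $O(n^3)$. For space, at any instant only one root-to-leaf path of the recursion is active, i.e.\ $d=\Theta(\sqrt{\log n})$ stack frames; a frame working on a rectangle of side $\ell$ stores $O(\ell/f)$ sampled edge values (plus one freshly produced edge of $O(\ell/f)$ values), each value taking $O(\log n)$ bits, together with $O(\log n)$ bits of bookkeeping. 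Since $\ell\le n$ throughout, summing over the $\Theta(\sqrt{\log n})$ frames gives $O\bigl(\sqrt{\log n}\cdot\tfrac{n}{2^{\sqrt{\log n}}}\cdot\log n\bigr)=O\bigl(\tfrac{n\log^{1.5}n}{2^{\sqrt{\log n}}}\bigr)$ bits, as claimed.

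The main obstacle is precisely that every horizontal or vertical cut of $L$ carries $\Theta(n)$ bits even after difference-encoding, so a genuinely sublinear-space algorithm must never write down a whole cut; the recursion is built entirely around this constraint. Making it rigorous requires (i) checking that each recomputation is itself a strictly smaller, legitimate instance of $\textsc{Boundary}$ (with correctly supplied top and left edges), so that the recursion is well founded, and (ii) bounding the resulting blow-up in running time -- each level multiplies the work by a factor polynomial in $f$ -- so that the product over the $d$ levels stays $O(n^3)$. It is exactly the tension between this time blow-up (which wants $f$ small) and the per-frame $\Theta((n/f)\log n)$ storage (which wants $f$ large) that pins down $f=2^{\Theta(\sqrt{\log n})}$ and hence the stated bounds; carrying out this balancing, and the attendant charging argument, is the technical heart of the proof.
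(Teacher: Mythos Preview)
Your high-level instinct---balance a branching parameter $f=2^{\Theta(\sqrt{\log n})}$ against a recursion depth $\Theta(\sqrt{\log n})$---matches the paper's, but the concrete recursion you describe has a gap that keeps it from actually running in sublinear space.

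The trouble is in how you supply the top edge of a tile $R_{a,b}$. You write that it is ``obtained by a recursive call to \textsc{Boundary} on the rectangle lying above the corresponding block of columns.'' Call that rectangle $C_{a,b}$; it has width $n/f$ but height $(a-1)\,n/f$, which is $\Theta(n)$ when $a$ is near $f$. Your \textsc{Boundary} interface demands the full left edge of its argument as input, so the call on $C_{a,b}$ needs a $\Theta(n)$-long column of $L$-values that you never explain how to obtain or store. If instead you meant to recurse only on the single tile $R_{a-1,b}$ immediately above, that call needs \emph{its} left edge (the right edge of $R_{a-1,b-1}$), which you have discarded; chasing both the ``top'' and ``left'' dependencies backwards gives a two-dimensional recomputation of branching factor~$2$ and depth~$\Theta(f)$ inside a single level, i.e.\ $2^{\Theta(f)}$ calls---super-polynomial. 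The $f$-sparse sample of corner values you keep does not break this chain, because a corner value alone cannot regenerate an $(n/f)$-long edge. In short, the $f\times f$ tiling forces you, somewhere, either to materialise a $\Theta(n)$-long cut or to pay super-polynomial time; the ``telescoping charge'' is asserted but, with the recursion as written, cannot be carried out to give $O(n^3)$.

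The paper avoids this by recursing in only one dimension and replacing ``compute the whole middle cut'' with ``branch over which short window of the cut a longest path uses.'' Concretely, it halves $m$ and, for each of the $\approx 2^{\sqrt{\log n}}$ blocks $V_h$ of $B=\Theta(n/2^{\sqrt{\log n}})$ consecutive middle-row vertices, makes two recursive calls: one on the upper-left sub-grid ending at $V_h$, one on the lower-right sub-grid starting at $V_h$. The only data passed between them is the length-$|V_h|\le B$ vector of longest-path values into $V_h$, which is absorbed into the edge weights of the second subproblem. Thus every call receives and emits $O(B)$ values, the depth is $\log(m/B)\le\sqrt{\log n}$, the space is $O(B\log n\cdot\sqrt{\log n})$, and the tree has $O(2^{\sqrt{\log n}}\cdot n)$ leaves each costing $O(Bn)$ time, giving $O(n^3)$. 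The essential idea you are missing is this guess-the-window branching: it guarantees that no stored boundary ever exceeds length $B$, which is exactly what your two-dimensional tiling cannot enforce.
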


\section{The algorithm}
We use the standard computational model in the literature of space-efficient algorithms.
That is, we take the RAM model with the following restrictions:
\begin{itemize}
  \setlength{\itemsep}{0pt}
  \item the input is in a \emph{read-only} memory;
  \item the output must be produced on a \emph{write-only} memory;
  \item an additional memory that is \emph{readable and writable} can be used.
\end{itemize}
We measure space consumption in the number of bits used within the additional memory.
Throughout the paper, we fix the base of logarithms to $2$. That is, $\log x$ means $\log_{2} x$.

For nonnegative integers $m$ and $n$,
we denote by $\Gamma_{m,n}$ the directed acyclic graph such that 
\begin{align*}
  V(\Gamma_{m,n}) ={}& \{v_{i,j} \mid 0 \le i \le m, \, 0 \le j \le n\},
  \\
  E(\Gamma_{m,n}) 
  ={}& 
  \{(v_{i,j}, v_{i+1,j}) \mid 0 \le i \le m-1, \, 0 \le j \le n\} \cup {}
  \\
  &\{(v_{i,j}, v_{i,j+1}) \mid 0 \le i \le m, \, 0 \le j \le n-1\} \cup {}
  \\
  &\{(v_{i,j}, v_{i+1,j+1}) \mid 0 \le i \le m-1, \, 0 \le j \le n-1\}.
\end{align*}
Namely, $\Gamma_{m,n}$ is the $(m+1) \times (n+1)$ grid with the main diagonal edge in each square,
where each edge is oriented from left to right and from top to bottom. See \figref{fig:gamma}.

\begin{figure}[htb]
  \centering
  \includegraphics[scale=1.2]{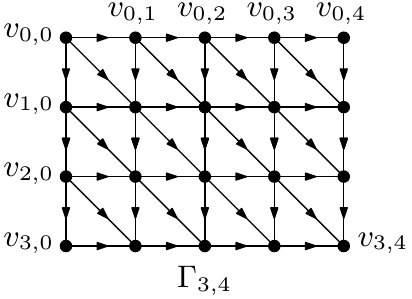}
  \caption{The directed acyclic graph $\Gamma_{m,n}$ with $m = 3$ and $n = 4$.}
  \label{fig:gamma}
\end{figure}

Let $S = s_{1} s_{2} \cdots s_{m}$ and $T = t_{1} t_{2} \cdots t_{n}$ be strings of length $m$ and $n$, respectively.
By $\Gamma(S,T)$, we denote the graph $\Gamma_{m,n}$ with the edge weights defined as follows:
the horizontal and vertical edges have weight $0$;
the diagonal edge $(v_{i,j}, v_{i+1,j+1})$ has weight $1$ if $s_{i+1} = t_{j+1}$; otherwise it has weight $0$.
It is easy to see that
every $v_{0,0}$--$v_{|S|, |T|}$ path is monotone in both $x$ and $y$ directions
and the positive weight edges in the path represent a common subsequence of $S$ and $T$.
Moreover, we can show the following fact, which is used in most of the existing algorithms.
\begin{observation}[Folklore]
The length of a longest common subsequence of $S$ and $T$ is equal to 
the longest path length from $v_{0,0}$ to $v_{|S|, |T|}$ in $\Gamma(S,T)$.
\end{observation}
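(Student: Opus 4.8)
The plan is to prove the two inequalities separately by exhibiting explicit, weight-preserving correspondences between common subsequences of $S,T$ and monotone weighted $v_{0,0}$--$v_{|S|,|T|}$ paths in $\Gamma(S,T)$. Write $m = |S|$ and $n = |T|$.

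First I would show that every common subsequence yields a path of at least the same weight. Let $Z = z_{1} z_{2} \cdots z_{k}$ be a common subsequence, witnessed by strictly increasing index sequences $a_{1} < \cdots < a_{k}$ in $S$ and $b_{1} < \cdots < b_{k}$ in $T$ with $s_{a_{l}} = t_{b_{l}} = z_{l}$ for all $l$. Starting from $v_{0,0}$, for each $l$ in turn I route the path through horizontal and vertical (weight-$0$) edges from the current vertex to $v_{a_{l}-1,\,b_{l}-1}$ and then take the diagonal edge to $v_{a_{l},b_{l}}$, which has weight $1$ since $s_{a_{l}} = t_{b_{l}}$; after processing $l = k$ I route from $v_{a_{k},b_{k}}$ to $v_{m,n}$. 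The one thing to check is that each intermediate routing is realizable in the DAG: this holds because the grid coordinates only need to be non-decreasing, and $a_{l+1}-1 \ge a_{l}$, $b_{l+1}-1 \ge b_{l}$ follow from strict monotonicity of the witness sequences. The resulting path has total weight exactly $k$, so the longest path length is at least the LCS length.

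Conversely, I would show that every $v_{0,0}$--$v_{m,n}$ path $P$ of weight $w$ yields a common subsequence of length $w$. List the weight-$1$ diagonal edges traversed by $P$, in the order they occur along $P$, as $(v_{i_{l},j_{l}}, v_{i_{l}+1,\,j_{l}+1})$ for $l = 1, \ldots, w$. Since $P$ is monotone in both coordinates, between the $l$-th and $(l{+}1)$-st such edge the path goes from $v_{i_{l}+1,\,j_{l}+1}$ to $v_{i_{l+1},j_{l+1}}$, which forces $i_{l+1} > i_{l}$ and $j_{l+1} > j_{l}$. Hence $a_{l} := i_{l}+1$ and $b_{l} := j_{l}+1$ are strictly increasing in $l$, and $s_{a_{l}} = t_{b_{l}}$ because the corresponding diagonal edge has weight $1$; therefore $s_{a_{1}} s_{a_{2}} \cdots s_{a_{w}}$ is a common subsequence of $S$ and $T$ of length $w$. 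Combining the two directions gives the claimed equality.

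There is no substantive obstacle here, as the statement is folklore; the only point requiring a little care is keeping the $1$-based string indices consistent with the $0$-based grid indices when translating a matched pair $(a_{l},b_{l})$ into the diagonal edge $(v_{a_{l}-1,\,b_{l}-1}, v_{a_{l},b_{l}})$ and back, and handling the trivial case $k = 0$ (any all-weight-$0$ monotone path).
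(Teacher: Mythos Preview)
Your proof is correct and is the standard argument. The paper does not actually prove this observation: it is stated as folklore without proof, preceded only by the one-line remark that the positive-weight edges of any $v_{0,0}$--$v_{|S|,|T|}$ path represent a common subsequence, which is exactly your second direction; your first direction (building a weight-$k$ path from a length-$k$ common subsequence) is the natural converse the paper leaves implicit.
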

As mentioned in the introduction, 
the length of a longest common subsequence of $S$ and $T$,
or equivalently the longest path length from $v_{0,0}$ to $v_{|S|, |T|}$ in $\Gamma(S,T)$,
can be computed in $\Order(mn)$ time with $\Order(m \log n)$ bits of space, where $m < n$.
We describe the idea of this algorithm in a slightly generalized setting.

We denote by $\Gamma_{m,n}^{w}$
the graph $\Gamma_{m,n}$ with an edge weighting $w$, where each edge weight can be represented in $\Order(\log n)$ bits.
Let $\lambda_{w}(v_{i,j}, v_{i',j'})$ be the longest path length from $v_{i,j}$ to $v_{i',j'}$ in $\Gamma_{m,n}^{w}$.
Then $\lambda_{w}$ can be expressed in a recursive way as follows:
\[
  \lambda_{w}(v_{0,0}, v_{i,j})
  =
  \begin{cases}
    0 & i = j = 0, \\
    \lambda_{w}(v_{0,0}, v_{0,j-1}) + w(v_{0,j-1}, v_{0,j})& i = 0,\ j \ge 1, \\
    \lambda_{w}(v_{0,0}, v_{i-1,0}) + w(v_{i-1,0}, v_{i,0}) & i \ge 1, \ j = 0, \\
    \max\left\{\begin{array}{l}
	\lambda_{w}(v_{0,0}, v_{i-1,j})   + w(v_{i-1,j}, v_{i,j}), \\
	\lambda_{w}(v_{0,0}, v_{i,j-1})   + w(v_{i,j-1}, v_{i,j}), \\
	\lambda_{w}(v_{0,0}, v_{i-1,j-1}) + w(v_{i-1,j-1}, v_{i,j}) \\
    \end{array}\right\} & i \ge 1, \ j \ge 1.
  \end{cases}
\]
Clearly, we can compute $\lambda_{w}(v_{0,0}, v_{m,n})$ in $\Order(mn)$ time.
While a naive implementation takes $\Order(mn \log n)$ bits of space,
by a simple trick of computing the entries for $v_{i,j}$ in an increasing order of $j$
and storing only two recent rows (that is, the rows $j-1$ and $j$),
we can reduce the space consumption to $\Order(m \log n)$ bits.
We call this recursive method the \emph{standard algorithm}.

We now explain the high-level idea of our algorithm.
Observe that the problem of deciding whether $\lambda_{w}(v_{0,0}, v_{m,n}) \ge \ell$ is in NL
since we can nondeterministically find the next vertex in a longest path in logspace
and we can forget the visited vertices as the graph is acyclic.
Hence, Savitch's theorem~\cite{Savitch70} implies that the problem can be solved deterministically with $\Order(\log^{2} n)$ bits of space
but in quasipolynomial ($n^{\Order(\log n)}$) time.
Such an algorithm recursively find the center of a longest path, and thus has a search tree of maximum degree $n$ and depth $\log n$.
Our algorithm can be seen as a modification of this algorithm.
Instead of guessing a single vertex included in a longest path, we guess a set of vertices that a longest path intersects.
The search tree of our algorithm has maximum degree $\Order(2^{\sqrt{\log n}})$ and depth $\Order(\sqrt{\log n})$.
This gives us a sublinear-space polynomial-time algorithm.

We now prove the main lemma based on the idea described above, which immediately implies Theorem~\ref{thm:main}.
\begin{lemma}
\label{lem:main}
For an edge weighting $w \colon E(\Gamma_{m,n}) \to \mathbb{Z} \cup \{-\infty\}$ given as a constant time oracle, 
where each edge weight can be represented in $\Order(\log n)$ bits,
the longest path length from $v_{0,0}$ to $v_{m,n}$ in $\Gamma_{m,n}^{w}$ can be computed in $\Order(n^{3})$ time
using $\Order\mleft(\frac{n \log^{1.5} n}{2^{\sqrt{\log n}}}\mright)$ bits of space.
\end{lemma}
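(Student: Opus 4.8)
The plan is to execute the standard column-by-column dynamic program but to replace its $\Theta(n)$-entry frontier by a sparsely stored, \emph{implicit} one, reconstructing the suppressed values on demand through a recursive decomposition of the grid into vertical strips; this spends extra time in order to save space. We may assume $m \le n$, since otherwise we transpose the grid and compose the oracle $w$ with the transposition at no cost. Put $g = 2^{\sqrt{\log n}}$.

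For a band of columns $[c,c']$, write $R_c$ for the vector whose $i$-th entry is $\lambda_w(v_{0,0}, v_{i,c})$; the number we want is a single entry of $R_n$, and $R_0$ is trivial to produce. The basic operation is to transfer a frontier across a band: given $[c,c']$ and access to $R_c$, produce access to $R_{c'}$. To do this we cut $[c,c']$ with $g-1$ equally spaced interior columns into $g$ sub-bands of width about $(c'-c)/g$ and transfer across them one after another, left to right, feeding each output in as the next input; the recursion stops once a sub-band has width $\Order(1)$, where the standard algorithm finishes in $\Order(\log n)$ bits. Because the width is divided by $g$ at every level, the recursion has depth $\log_g n = \tfrac{\log n}{\log g} = \sqrt{\log n}$ and branching at most $g = 2^{\sqrt{\log n}}$---precisely the search tree sketched before Lemma~\ref{lem:main}---so it performs $g^{\sqrt{\log n}} = n$ base-case calls up to lower-order factors, and a routine recurrence in the band width (accounting also for the recomputations below) bounds the total running time by $\Order(n^3)$.

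The heart of the matter, and the step I expect to be hardest, is the space bound: a frontier $R_c$ has $m+1 = \Theta(n)$ entries, so it cannot be held in full even once. Instead, each active transfer keeps its current frontier only at a $g$-spaced subset of rows---an $\Order\mleft(\tfrac{n}{2^{\sqrt{\log n}}}\mright)$-entry table---and reconstructs any other entry the instant it is queried, by re-deriving it from that table together with a recursive transfer across the sub-band just to its left. Keeping this reconstruction from cascading into superpolynomial time, and from piling up many simultaneous recursive frames at one level, is where the planar/Monge structure of grid-distance matrices enters: the all-pairs longest-path matrix between the two sides of a strip is totally monotone, so the entries one needs can be located by SMAWK-type row-maximum searches and compactly summarized rather than recomputed wholesale. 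Granting this, each level of the recursion stack holds its sparse table---$\Order\mleft(\tfrac{n}{2^{\sqrt{\log n}}}\log n\mright)$ bits---together with $\Order(g\log n)$ bits of interior-boundary and loop bookkeeping, which is lower order since $n/g \ge g$; summing over the $\sqrt{\log n}$ stacked levels yields the claimed $\Order\mleft(\tfrac{n\log^{1.5}n}{2^{\sqrt{\log n}}}\mright)$ bits. Thus the two points that need genuine care are (i) fixing the implicit representation of $R_c$ and a reconstruction procedure with cumulative cost $\Order(n^3)$, and (ii) the precise space bookkeeping---checking that at no instant is more than one sparse table per level, or anything of size $\Theta(n)$, held at once.
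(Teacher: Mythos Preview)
Your high-level parameters---branching $2^{\sqrt{\log n}}$, depth $\sqrt{\log n}$---match the paper's, but the place you yourself flag as hardest is a genuine gap, not a detail. You pass column-frontiers $R_c$ of length $m+1=\Theta(n)$ between sibling sub-bands, stored only at $g$-spaced rows, with missing entries ``reconstructed on demand.'' No reconstruction procedure is actually given: to rebuild $R_{c_k}[i]$ you appeal to ``the sub-band just to its left,'' whose own input $R_{c_{k-1}}$ is also only sparsely held, so the requests cascade; and retaining all $g$ sparse tables at a single level is already $\Theta(n\log n)$ bits. The Monge/SMAWK remark is correct but does not close the gap---SMAWK locates row maxima given oracle access to matrix entries, and here each entry is $R_c[j]+D(j,i)$, which costs precisely the unstored value $R_c[j]$ plus a strip distance. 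You write ``Granting this'' at the exact point where the argument has to happen, and nothing in the sketch simultaneously bounds the live space and the cumulative recomputation time.

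The paper sidesteps the whole difficulty by never forming a full frontier. It cuts at row $m/2$ and, rather than recording where \emph{every} path crosses that row, \emph{loops over} which block $V_h$ of $B\approx n/2^{\sqrt{\log n}}$ consecutive middle-row vertices the optimal path meets. For each of the $\Order(2^{\sqrt{\log n}})$ choices of $h$ it recurses on the upper half to obtain the length-$|V_h|\le B$ vector of distances from $v_{0,0}$ to $V_h$, encodes that vector into the edge weights along the top row of the lower-right subgrid, and recurses again to reach a size-$\le B$ target set in the bottom row; it then takes the maximum over $h$. Every vector ever produced or consumed has length at most $B$, so each of the at most $\sqrt{\log n}$ stack frames holds $\Order(B\log n)$ bits and the space bound is immediate; the $\Order(n^3)$ time follows from the leaf count. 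The idea missing from your plan is this guess-the-crossing-block trick, which replaces ``sparsify a $\Theta(n)$-object and reconstruct on demand'' with ``arrange the recursion so that no $\Theta(n)$-object is ever needed.''
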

\begin{proof}
Without loss of generality assume that $m \le n$.
We also assume that $m$ is a power of~$2$ by 
adding, if necessary, some dummy columns and rows and set the weights of the new edges to~$-\infty$.
The numbers of columns and rows are doubled in the worst case.
Note that we do not have to construct the dummy columns and rows explicitly.
We just need to remember the original $m$ and $n$ with $\Order(\log n)$ bits of additional space consumption.

Let $B = \lceil (n+1)/2^{\sqrt{\log n}} \rceil$.
We compute the length in a recursive way.
In the recursive algorithm described below,
$m$ and $n$ may get smaller in each recursive call, while we keep $B$ the same.
The algorithm solves a slightly generalized problem,
where we compute the longest path lengths from $v_{0,0}$ 
to the vertices in $T \coloneqq \{v_{m,j} \mid (\lceil (n+1)/B\rceil -1) B \le j \le n\}$.

\paragraph{The algorithm}
Our algorithm solves the problem in a recursive way.
(See Algorithm~\ref{alg:outline} for the outline.)
If $\min\{m,n\} \le 2B$, then we solve the problem with the standard algorithm.

Assume that $m, n > 2B$. 
Let $V_{h} = \{v_{m/2, j} \mid h B \le j \le \min\{(h+1) B -1, n\}\}$ for $0 \le h \le\lceil (n+1)/B \rceil -1$.
Observe that for every $u \in T$, each $v_{0,0}$--$u$ path intersects $V_{h}$ for some $h$. 
(Note that such $h$ is not necessarily unique.)
Such a path first goes through the upper-left part 
induced by $\{v_{i,j} \mid 0 \le i \le m/2, \, 0 \le j \le \min\{(h+1)B - 1, n\}\}$, 
reaches $V_{h}$,
then goes through the bottom-right part 
induced by $\{v_{i,j} \mid m/2 \le i \le m, \, hB \le j \le n\}$, 
and finally reaches $u$.
See \figref{fig:recursion}.

\begin{figure}[htb]
  \centering
  \includegraphics[width=.99\textwidth]{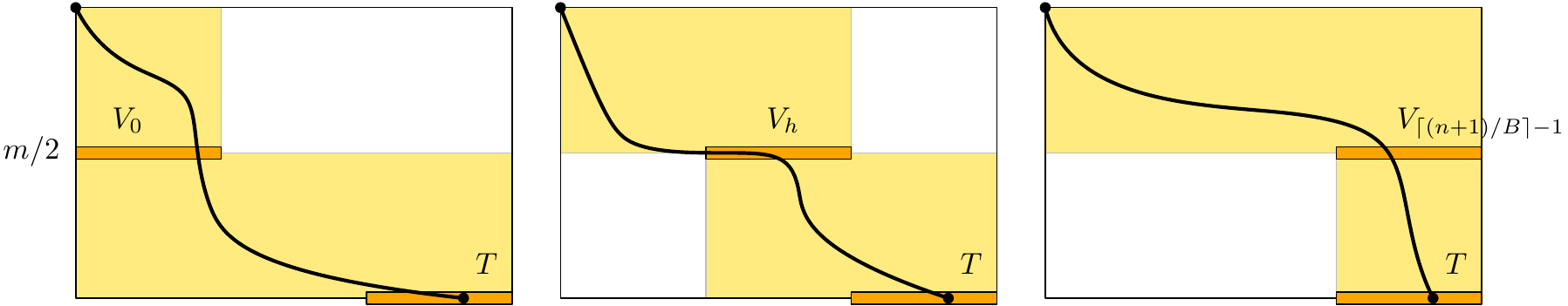}
  \caption{Every path intersecting $V_{h}$ goes through upper-left and bottom-right regions of the grid.}
  \label{fig:recursion}
\end{figure}

Based on the observation above, 
we divide the problem into $\lceil (n+1)/B \rceil$ pairs of subproblems as follows.
Let $h \in \{0,\dots,\lceil (n+1)/B \rceil - 1\}$.
We first find the longest path lengths from $v_{0,0}$ to the vertices of $V_{h}$ in $\Gamma_{m,n}^{w}$.
We find the lengths by recursively applying the algorithm to $\Gamma_{m/2,\min\{(h+1) B -1, n\}}^{w'}$,
where $w'$ is obtained from $w$ by restricting it to $\Gamma_{m/2,\min\{(h+1) B -1, n\}}$.
The recursive call returns the longest path lengths $\bm{\ell}'$ from $v_{0,0}$ to $V_{h}$,
where $\bm{\ell}'$ is a $|V_{h}|$-dimensional vector indexed by the second coordinates $hB, hB+1,\dots,\min\{(h+1)B-1,n\}$ of the vertices.

Let $w''$ be the edge weighting of $\Gamma_{m/2, n-hB}$ obtained from $w$ and $\bm{\ell}'$ as follows:
\begin{align*}
  w''(v_{0,j}, v_{0,j+1}) &= \bm{\ell}'(j+hB+1) - \bm{\ell}'(j+hB) &&\text{for } 0 \le j \le |V_{h}|-2,
  \\
  w''(v_{i,j}, v_{i',j'}) &= w(v_{i+m/2, j+hB}, v_{i'+m/2, j'+hB}) &&\text{otherwise}.
\end{align*}
Namely, $w''$ is obtained from $w$ by first restricting it into its bottom-right part starting at $V_{h}$,
and then changing the weight of the edges in $V_{h}$ with respect to the longest path lengths from $v_{0,0}$.
We now recursively apply the algorithm to $\Gamma_{m/2, n-hB}^{w''}$ for computing $\bm{\ell}''$,
the longest path lengths from $V_{h}$ to $T$ under $w''$.

Now we set $\bm{\ell}(j) = \max_{0 \le h \le \lceil (n+1)/B \rceil -1} \bm{\ell}'(hB) + \bm{\ell}''(j - hB)$
for $(\lceil (n+1)/B \rceil -1)B \le j \le n$.
The correctness of this final step follows from the claim below.
\begin{claim}
\label{clm:correctness-of-the-recursion}
The maximum length of a path $P$ from $v_{0,0}$ to $v_{m,j} \in T$ passing through $V_{h}$
is $\bm{\ell}'(hB) + \bm{\ell}''(j-hB)$.
\end{claim}
\begin{proof}
[of Claim~\ref{clm:correctness-of-the-recursion}]
Let $k$ be the maximum index such that $v_{m/2, k} \in V_{h}$ is included in $P$.
Then, the length of $P$ is $\lambda_{w}(v_{0,0},v_{m/2,k}) + \lambda_{w}(v_{m/2,k}, v_{m,j})$.

Let $Q$ be the subpath of $P$ from $v_{m/2,k}$ to $v_{m,j}$. 
The length of $Q$ is $\lambda_{w}(v_{m/2,k}, v_{m,j})$.
Let $P''$ be the $v_{0,0}$--$v_{m/2,j-hB}$ path in $\Gamma_{m/2,n-hB}^{w''}$
that first takes the unique $v_{0,0}$--$v_{0,k-hB}$ path of length 
$\sum_{0 \le p \le k-hB-1}(\bm{\ell}'(p+hB+1) - \bm{\ell}'(p+hB)) = \bm{\ell}'(k) - \bm{\ell}'(hB)$
and then follows $Q$ by shifting each vertex coordinate by $-(m/2, hB)$.
Since $\bm{\ell}'(k) = \lambda_{w}(v_{0,0},v_{m/2,k})$,
the length of $P''$ is 
$\lambda_{w}(v_{0,0},v_{m/2,k})+ \lambda_{w}(v_{m/2,k}, v_{m,j}) - \bm{\ell}'(hB)$.
Since $P''$ is a $v_{0,0}$--$v_{m/2,j-hB}$ path in $\Gamma_{m/2,n-hB}^{w''}$,
it holds that 
\begin{align}
  \bm{\ell}''(j-hB) \ge \lambda_{w}(v_{0,0},v_{m/2,k})+ \lambda_{w}(v_{m/2,k}, v_{m,j}) - \bm{\ell}'(hB). 
  \label{eq:correctness-of-the-recursion-1}
\end{align}
 
Let $Q''$ be a $v_{0,0}$--$v_{m/2,j-hB}$ path of length $\bm{\ell}''(j-hB)$ in $\Gamma_{m/2,n-hB}^{w''}$.
Let $q$ be the maximum index such that $q \le |V_{h}| - 1$ and $v_{0,q}$ is included in $Q''$.
The unique $v_{0,0}$--$v_{0,q}$ path in $\Gamma_{m/2,n-hB}^{w''}$ has length 
$\sum_{0 \le p \le q-1}(\bm{\ell}'(p+hB+1) - \bm{\ell}'(p+hB)) = \bm{\ell}'(q+hB) - \bm{\ell}'(hB)$.
From the construction of $w''$, the rest of $P''$ starting at $v_{0,q}$ has length
$\lambda_{w''}(v_{0,q}, v_{m/2,j-hB}) = \lambda_{w}(v_{m/2,q+hB}, v_{m,j})$.
Since $\bm{\ell}'(q+hB) = \lambda_{w}(v_{0,0}, v_{m/2,q+hB})$,
\begin{align}
  \bm{\ell}''(j-hB) 
  &=
  \lambda_{w}(v_{0,0},v_{m/2,q+hB}) + \lambda_{w}(v_{m/2,q+hB}, v_{m,j}) - \bm{\ell}'(hB)  \nonumber
  \\
  &\le 
  \lambda_{w}(v_{0,0},v_{m/2,k}) + \lambda_{w}(v_{m/2,k}, v_{m,j}) - \bm{\ell}'(hB). 
  \label{eq:correctness-of-the-recursion-2}
\end{align}
Equations~\eqref{eq:correctness-of-the-recursion-1} and \eqref{eq:correctness-of-the-recursion-2}
imply that $\bm{\ell}''(j-hB) + \bm{\ell}'(hB) = \lambda_{w}(v_{0,0},v_{m/2,k}) + \lambda_{w}(v_{m/2,k}, v_{m,j})$.
\localqed
\end{proof}

\begin{algorithm}
  \caption{The algorithm given in the proof of Lemma~\ref{lem:main}.}
  \label{alg:outline}
  \begin{algorithmic}[1]
    \Procedure{LongestPathLength}{$\Gamma_{m,n}^{w}$, $B$}
    \If{$\min\{m,n\} \le 2B$}
      \State Use the standard algorithm and return the longest path lengths.
    \Else
      \State $\bm{\ell} \coloneqq \bm{0}$ \Comment{$\lambda_{w}(v_{0,0}, v_{m,j})$ for $(\lceil (n+1)/B\rceil -1) B \le j \le n$}
      \For{$h \in \{0,\dots,\lceil (n+1)/B\rceil-1\}$}
        \State $\bm{\ell}' \coloneqq {}$\Call{LongestPathLength}{$\Gamma_{m/2,\min\{(h+1) B -1, n\}}^{w'}$, $B$}
	\State Compute $w''$ from $w$ and $\bm{\ell}'$.
        \State $\bm{\ell}'' \coloneqq {}$\Call{LongestPathLength}{$\Gamma_{m/2,n-hB}^{w''}$, $B$}
	\For{$j \in \{(\lceil (n+1)/B\rceil -1) B, \dots, n\}$}
	  \State $\bm{\ell}(j) \coloneqq \max\{\bm{\ell}(j), \, \bm{\ell}'(hB) + \bm{\ell}''(j-hB)\}$
	\EndFor
      \EndFor
      \State \Return $\bm{\ell}$
    \EndIf
    \EndProcedure
  \end{algorithmic}
\end{algorithm}

\paragraph{Space consumption}
In the recursion tree of Algorithm~\ref{alg:outline}, each inner node
stores $\Order(B \log n)$ bits for $\ell$, $\ell'$, and $\ell''$.
The weight functions $w'$ and $w''$ are also stored at each inner node
and provided as constant time oracles for the children in the recursive tree.
Observe that $w'$ can be represented by $w$ with a constant number of indices,
and $w''$ can be represented by $w$ with a constant number of indices and $\bm{\ell}'$.
In total, an inner node stores $\Order(B \log n)$ bits of information.
Each leaf node executes the standard algorithm with $\min\{m,n\} \le 2B$, and thus only needs $\Order(B \log n)$ bits.
Since the depth of the recursion tree is $\lceil \log (m/(2B)) \rceil < \sqrt{\log n}$,
the total space consumption is $\Order(B \log n \cdot \sqrt{\log n}) = \Order(n \log^{1.5} n / 2^{\sqrt{\log n}})$.

\paragraph{Running time}
We estimate an upper bound $L(m)$ of the number of leaves in the recursion tree, where $m+1$ is the number of rows.
If $m \le 2B$, then $L(m) \le 1$.
Assume that $m > 2B$. Then
\[
  L(m) \le 2 \lceil (n+1)/B \rceil \cdot L(m/2) < (2 (2^{\sqrt{\log n}} + 1))^{\sqrt{\log n}}
\]
since the algorithm invokes at most $2\lceil (n+1)/B \rceil \le 2^{\sqrt{\log n}} + 1$ recursive calls with the parameter $m/2$
and the depth of recursion is at most $\lceil \log (m/(2B)) \rceil < \sqrt{\log n}$.
Since $(2^{\sqrt{\log n}} + 1)^{\sqrt{\log n}} \le e \cdot (2^{\sqrt{\log n}})^{\sqrt{\log n}} \le en$,
it holds that $L(m) \in \Order(2^{\sqrt{\log n}} \cdot n)$.
Since each inner node of the recursion tree has two or more children,
the number of all nodes in the recursion tree is also $\Order(2^{\sqrt{\log n}} \cdot n)$.
Each leaf node takes $\Order(B n)$ time,
and each inner node takes $\Order(n)$ time excluding the time spent by its children.
Therefore, the total running time is $\Order(B n \cdot 2^{\sqrt{\log n}} \cdot n) = \Order(n^{3})$.
\qed
\end{proof}

\section{Concluding Remarks}
We have presented an algorithm for computing the length of a longest common subsequence of two string of length $n$
in $\Order(n^{3})$ time using $\Order\mleft(\frac{n \log^{1.5} n}{2^{\sqrt{\log n}}}\mright)$ bits of space.
The challenge for finding a polynomial-time algorithm with 
$\Order(n^{1-\varepsilon})$ space for some constant $\varepsilon > 0$ remains unsettled.

Our algorithm in Lemma~\ref{lem:main} is designed for a slightly general problem on the edge weighted grid-like graph $\Gamma_{m,n}^{w}$.
The generality allows us to compute some other similarity measures of strings as well.
For example, the \emph{edit distance} (or the \emph{Levenshtein distance}) between two strings is the minimum number of operations
(insertions, deletions, or substitutions) required to make the strings the same.
We can formulate this problem as the shortest path problem on $\Gamma_{m,n}^{w}$
even in a general setting where each operation has different cost possibly depending on the symbols involved.
Since the shortest path problem on $\Gamma_{m,n}^{w}$ is equivalent to 
the longest path problem on $\Gamma_{m,n}^{-w}$,
Lemma~\ref{lem:main} implies that we can compute the (general) edit distance in the same time and space complexity.

\bibliographystyle{plainurl}
\bibliography{lcs}

\begin{thebibliography}{10}

\bibitem{AbboudBW15}
Amir Abboud, Arturs Backurs, and Virginia~Vassilevska Williams.
\newblock Tight hardness results for {LCS} and other sequence similarity
  measures.
\newblock In {\em {FOCS} 2015}, pages 59--78, 2015.
\newblock \href {https://doi.org/10.1109/FOCS.2015.14}
  {\path{doi:10.1109/FOCS.2015.14}}.

\bibitem{AbboudHWW16}
Amir Abboud, Thomas~Dueholm Hansen, Virginia~Vassilevska Williams, and Ryan
  Williams.
\newblock Simulating branching programs with edit distance and friends: {Or}:
  {A} polylog shaved is a lower bound made.
\newblock In {\em {STOC} 2016}, pages 375--388, 2016.
\newblock \href {https://doi.org/10.1145/2897518.2897653}
  {\path{doi:10.1145/2897518.2897653}}.

\bibitem{Apostolico97}
Alberto Apostolico.
\newblock String editing and longest common subsequences.
\newblock In {\em Handbook of Formal Languages, Volume 2. Linear Modeling:
  Background and Application}, pages 361--398. 1997.
\newblock \href {https://doi.org/10.1007/978-3-662-07675-0_8}
  {\path{doi:10.1007/978-3-662-07675-0_8}}.

\bibitem{BackursI18}
Arturs Backurs and Piotr Indyk.
\newblock Edit distance cannot be computed in strongly subquadratic time
  (unless {SETH} is false).
\newblock {\em {SIAM} J. Comput.}, 47(3):1087--1097, 2018.
\newblock \href {https://doi.org/10.1137/15M1053128}
  {\path{doi:10.1137/15M1053128}}.

\bibitem{BergrothHR00}
Lasse Bergroth, Harri Hakonen, and Timo Raita.
\newblock A survey of longest common subsequence algorithms.
\newblock In {\em {SPIRE} 2000}, pages 39--48, 2000.
\newblock \href {https://doi.org/10.1109/SPIRE.2000.878178}
  {\path{doi:10.1109/SPIRE.2000.878178}}.

\bibitem{BringmannK15}
Karl Bringmann and Marvin K{\"{u}}nnemann.
\newblock Quadratic conditional lower bounds for string problems and dynamic
  time warping.
\newblock In {\em {FOCS} 2015}, pages 79--97, 2015.
\newblock \href {https://doi.org/10.1109/FOCS.2015.15}
  {\path{doi:10.1109/FOCS.2015.15}}.

\bibitem{Cheng20arxiv}
Kuan Cheng, Zhengzhong Jin, Xin Li, and Yu~Zheng.
\newblock Space efficient deterministic approximation of string measures.
\newblock {\em CoRR}, abs/2002.08498, 2020.
\newblock \href {http://arxiv.org/abs/2002.08498} {\path{arXiv:2002.08498}}.

\bibitem{HajiaghayiSSS19}
MohammadTaghi Hajiaghayi, Masoud Seddighin, Saeed Seddighin, and Xiaorui Sun.
\newblock Approximating {LCS} in linear time: Beating the $\sqrt{n}$ barrier.
\newblock In {\em {SODA} 2019}, pages 1181--1200, 2019.
\newblock \href {https://doi.org/10.1137/1.9781611975482.72}
  {\path{doi:10.1137/1.9781611975482.72}}.

\bibitem{Hirschberg75}
Daniel~S. Hirschberg.
\newblock A linear space algorithm for computing maximal common subsequences.
\newblock {\em Commun. {ACM}}, 18(6):341--343, 1975.
\newblock \href {https://doi.org/10.1145/360825.360861}
  {\path{doi:10.1145/360825.360861}}.

\bibitem{Hirschberg83}
Daniel~S. Hirschberg.
\newblock Recent results on the complexity of common subsequence problems.
\newblock In {\em Time Warps, String Edits, and Macromolecules}, pages
  323--328. 1983.

\bibitem{PatersonD94}
Mike Paterson and Vlado Danc{\'{\i}}k.
\newblock Longest common subsequences.
\newblock In {\em MFCS 1994}, volume 841 of {\em Lecture Notes in Computer
  Science}, pages 127--142, 1994.
\newblock \href {https://doi.org/10.1007/3-540-58338-6_63}
  {\path{doi:10.1007/3-540-58338-6_63}}.

\bibitem{RubinsteinS20}
Aviad Rubinstein and Zhao Song.
\newblock Reducing approximate longest common subsequence to approximate edit
  distance.
\newblock In {\em {SODA} 2020}, pages 1591--1600, 2020.
\newblock \href {https://doi.org/10.1137/1.9781611975994.98}
  {\path{doi:10.1137/1.9781611975994.98}}.

\bibitem{Savitch70}
Walter~J. Savitch.
\newblock Relationships between nondeterministic and deterministic tape
  complexities.
\newblock {\em J. Comput. Syst. Sci.}, 4(2):177--192, 1970.
\newblock \href {https://doi.org/10.1016/S0022-0000(70)80006-X}
  {\path{doi:10.1016/S0022-0000(70)80006-X}}.

\bibitem{WagnerF74}
Robert~A. Wagner and Michael~J. Fischer.
\newblock The string-to-string correction problem.
\newblock {\em J. {ACM}}, 21(1):168--173, 1974.
\newblock \href {https://doi.org/10.1145/321796.321811}
  {\path{doi:10.1145/321796.321811}}.

\end{thebibliography}

\end{document}